\newcommand{\comment}[1]{}
\newcommand{\ketbra}[2]{|#1\rangle\!\langle#2|}
\newcommand{\proj}[1]{\ketbra{#1}{#1}}
\newcommand{\tr}{{\rm tr}}
\newcommand{\cH}{\mathcal{H}}
\newcommand{\cM}{\mathcal{M}}
\newcommand{\cP}{\mathcal{P}}
\newcommand{\ot}{\otimes}
\newcommand{\cC}{{\rm C}}
\newcommand{\qQ}{{\rm Q}}
\newcommand{\nuparrow}{\hspace{-0.18ex}{\scriptscriptstyle{\not}}\hspace{0.18ex}\uparrow}
\theoremstyle{plain}
\newtheorem{theorem}{Theorem}
\newtheorem{lemma}{Lemma}
\theoremstyle{definition}
\begin{document}

\title{Inability of the entropy vector method to certify nonclassicality in linelike causal structures}

\date{\today}

\author{Mirjam \surname{Weilenmann}}
\email{msw518@york.ac.uk}
\affiliation{Department of Mathematics, University of York,
  Heslington, York, YO10 5DD, UK.}

\author{Roger \surname{Colbeck}}
\email{roger.colbeck@york.ac.uk}
\affiliation{Department of Mathematics, University of York,
  Heslington, York, YO10 5DD, UK.}

\begin{abstract}
  Bell's theorem shows that our intuitive understanding of causation
  must be overturned in light of quantum correlations. Nevertheless,
  quantum mechanics does not permit signalling and hence a notion of
  cause remains.  Understanding this notion is not only important at a
  fundamental level, but also for technological applications such as
  key distribution and randomness expansion.  It has recently been
  shown that a useful way to decide which classical causal structures
  could give rise to a given set of correlations is to use entropy
  vectors.  These are vectors whose components are the entropies of
  all subsets of the observed variables in the causal structure.  The
  entropy vector method employs causal relationships among the
  variables to restrict the set of possible entropy vectors.
  
  Here, we consider whether the same approach can lead to useful
  certificates of non-classicality within a given causal structure.
  Surprisingly, we find that for a family of causal structures that
  include the usual bipartite Bell structure they do not.  For all
  members of this family, no function of the entropies of the observed
  variables gives such a certificate, in spite of the existence of
  non-classical correlations.  It is therefore necessary to look
  beyond entropy vectors to understand cause from a quantum
  perspective.
\end{abstract}

\maketitle

\section{Introduction}
Correlation and causation are two different things.  They are however
connected.  Reichenbach's principle~\cite{Reichenbach1956} says that if
two events $X$ and $Y$ are correlated then either $X$ causes $Y$, $Y$
causes $X$ or they have a common cause.  In the standard view of
causation, having a common cause corresponds to the existence of a
shared random variable from which the observed correlations derive.
In other words, if $X$ and $Y$ have a common cause, then there exists
a random variable $C$ such that
$P_{XY}(x,y)=\sum_{c}P_{C}(c)P_{X|c}(x)P_{Y|c}(y)$.

In this work we will take a broader view of causation that allows the
common cause to be more general than a shared random variable, a direction that has also been considered in~\cite{Henson2014,Chaves2015,Pienaar2015,Fritz2015,Chaves2016}.  In
particular, we will allow shared quantum systems, so that if $X$ and
$Y$ have a \emph{quantum} common cause, then there exists a bipartite
quantum system $\rho$ and measurements described by POVMs $\{E_x\}_x$
and $\{F_y\}_y$ such that $P_{XY}(x,y)=\tr((E_x\ot F_y)\rho)$.

In this simple case, there is no separation between the sets of
classical and quantum correlations: for any $P_{XY}$ we can find a
classical common cause explanation as well as a quantum one.  However,
for more general causal structures this is not the case.
Bell~\cite{Bell1964} was the first to notice that quantum common
causes could allow for stronger correlations than their classical
counterparts.  Any restriction on the set of correlations that follows
under the assumption that any common causes are classical has been
termed a Bell inequality, and many such inequalities have been
discovered (e.g.~\cite{Clauser1969,GHZ}). The connection between Bell
inequalities and the literature on causal structures was elucidated
in~\cite{WoodSpekkens}, where a novel take on Bell's theorem was given.

Ruling out classical common causes is important in information theory,
and, especially, for device-independent
cryptography~\cite{Ekert1991,Mayers1998,Barrett2005b,Acin2006,Colbeck2009,
Colbeck2011,Pironio2010,Vazirani2014}. In
particular, it has recently been shown that the ability to demonstrate
non-classicality implies the ability to generate secure random
numbers~\cite{Miller2014}. It is therefore important to characterize
the set of classical correlations as far as possible.  Work in this
direction also helps us to understand the meaning of causation in
quantum theory.

In the standard Bell scenario, shown in Figure~\ref{fig:Bell}, the set
of classical correlations is well understood. However, as the scenario
is made more complicated, it rapidly becomes difficult to compute all
the Bell inequalities~\cite{WernerWeb}, and hence to precisely
separate the classical region from the non-classical.

\begin{figure}
\centering
\includegraphics[width=0.7\columnwidth]{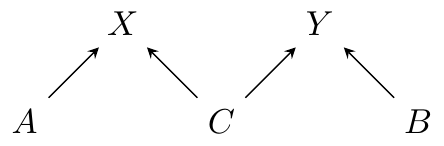}
\caption{The standard causal structure of a bipartite Bell experiment.
  Here $A$, $B$, $X$ and $Y$ are observed; $A$ and $B$ correspond to
  input settings and $X$ and $Y$ to outcomes.  If the common
  cause is classical, then the observed correlations satisfy
  $P_{ABXY}=\sum_CP_AP_BP_C P_{X|AC}P_{Y|BC}$.  In the case that $A$,
  $B$, $X$ and $Y$ are binary, the CHSH inequality~\cite{Clauser1969}
  can be derived. However, if the common cause is quantum this
  inequality can be violated, but Tsirelson's bound must hold
  instead~\cite{Tsirelson1993}.}
\label{fig:Bell}
\end{figure}

An approach to causal structures using entropy has recently been
developed~\cite{Steudel2015,Fritz2013,Chaves2012a,Chaves2013,Chaves2014,Chaves2014b,
  Chaves2015,Henson2014}.  The idea behind this approach is to study
the entropies of the observed variables that can be realised by
correlations within a given causal structure, rather than the
correlations themselves.  Note that entropy has been used in (at
least) two different ways in the causal structures literature.  In
this paper we study one of these ways and introduce the term
\emph{entropy vector method} for it.  When applied to $n$ observed
variables, the central object is the vector whose $2^n-1$ components
are the entropies of each subset of the variables (excluding the empty
set)\footnote{Note that when we refer to the entropy vector method, we
  consider entropies of observed variables. In particular, alternative
  approaches that condition on output values of some of the observed
  variables are not included in this terminology (see later in the
  introduction as well as in the discussion and in
  Appendix~\ref{sec:2} for details on alternative approaches of the
  latter kind).}.  This method is inviting because causal constraints
correspond to linear inequalities on entropies, rather than the
non-linear relations they imply for the probabilities.  This means
that entropy vectors are effective at distinguishing whether a set of
correlations can be generated within a particular causal structure.
Furthermore, the approach does not rely on any assumptions on the size
of the alphabet of the involved random variables.  In this paper we
study the use of this approach as a means of separating classical and
quantum versions of a given causal structure, focusing on a family of
``line-like'' causal structures that include the bipartite Bell
structure.

One of the advantages of the entropy vector method is its generality|it
applies to any causal structure.  However, other ways to use entropy
can be useful in this context and inequalities using entropy have been
derived for the bipartite Bell
scenario~\cite{Braunstein1988,Cerf1997}. These inequalities do not
concern the entropies of the observed variables directly, but rather
involve entropies of variables conditioned on particular outcomes of
other variables. This technique has recently been generalised to other
scenarios~\cite{Chaves2013,Fritz2013,Pienaar2016}. In the discussion
we elaborate on this alternative technique and we exemplify its
application to line-like causal structures in Appendix~\ref{sec:2}.
In contrast to the entropy vector method, this fine-grained technique
is not straightforwardly applicable to general causal structures and
for many causal structures it is not clear how to motivate entropic
inequalities of this type.

\section{The entropy vector method} \label{sec:method} We first
outline the classical case, initially introduced in~\cite{Yeung1997},
and its application to causal structures~\cite{Fritz2013}. For a
random variable, $X$, distributed according to $P_X$ we use the
Shannon entropy, $H(X):=-\sum_xP_X(x)\log P_X(x)$.\footnote{In this
  work lower case letters are used to denote particular instances of
  upper case random variables, and all random variables are taken to
  have finite alphabet.}  The conditional entropy is then defined by
$H(X|Y):=H(XY)-H(Y)$ and the conditional mutual information by
$I(X:Y|Z):=H(XZ)+H(YZ)-H(XYZ)-H(Z)$. A distribution over $n$ random
variables $X_1,\ldots,X_n$ has an associated entropy vector whose
$2^n-1$ components are the entropies of every subset of variables
(excluding the empty set).  Because they correspond to entropies of a
joint distribution, these components must satisfy certain constraints.
For example, they must be positive, obey monotonicity, i.e., $H(S)\leq
H(RS)$, and sub-modularity (or strong subadditivity), i.e.,
$H(RS)+H(ST)\geq H(RST)+H(S)$, where $R$, $S$, and $T$ denote disjoint
subsets of the $n$ random variables.  Monotonicity and sub-modularity
are equivalent to the positivity of the conditional entropy and
conditional mutual information respectively. This set of linear
constraints are called the \emph{Shannon constraints}.

Let ${\bf H}:P_{X_1\ldots X_n}\mapsto\mathbb{R}^{2^n-1}$ denote the
map from a joint distribution to its entropy vector.  We will consider
the set of entropy vectors that can be formed by applying ${\bf H}$ to
a probability distribution, i.e.,
$\Gamma^*_n=\{v\in\mathbb{R}^{2^n-1}:v={\bf H}(P_{X_1\ldots X_n})\}$
and its closure $\overline{\Gamma^*_n}$.  The latter is known to be
convex~\cite{Yeung1997}.  It is natural to ask whether any vector
$v\in\mathbb{R}^{2^n-1}$ that obeys the Shannon constraints is also in
$\overline{\Gamma^*_n}$. It turns out that this is the case for $n\leq
3$, but does not hold for larger $n$~\cite{Zhang1997}. Thus, the
Shannon constraints are necessary but not sufficient in order for a
vector to be the entropy vector of a probability distribution and the
set of vectors obeying these constraints is an outer approximation to
the set of achievable entropy vectors.

In order to account for the causal structure additional constraints
are included.  A \emph{causal structure} comprises a set of nodes
arranged in a directed acyclic graph (DAG).  A subset of these nodes
is designated as \emph{observed}.  If the causal structure is
\emph{classical}, each unobserved node has a corresponding random
variable.  For a causal structure $G$, we will use $G^\cC$ to denote
its classical version.  If all the nodes are observed, a probability
distribution is said to be \emph{compatible} with a classical causal
structure if it decomposes as
\begin{equation}\label{eq:compat}
P_{X_1\ldots X_n}=\prod_{i=1}^nP_{X_i|X_i^{\downarrow_1}}\, ,
\end{equation}
where $X_i^{\downarrow_1}$ denotes the parents of $X_i$ in the DAG.
For a classical causal structure $G^\cC$, we will use $\cP(G^\cC)$ to
denote the set of compatible distributions. If not all nodes are
observed, compatibility is defined by the existence of a joint
distribution that is compatible with the equivalent causal structure
with all nodes observed and having the correct marginal distribution
over the observed nodes (see Figure~\ref{fig:Bell} for an example). We
will denote this set $\cP_{\cM}(G^\cC)$.

A probability distribution decomposes as in~\eqref{eq:compat} if and
only if every variable $X_i$ is independent of its non-descendants
$X_i^{\nuparrow}$ conditioned on its parents $X_i^{\downarrow_1}$
(cf.\ Theorem~1.2.7 in~\cite{Pearl2009}).  Thus, for a DAG with $n$
variables, the compatibility constraints are implied by a minimal set
of (at most) $n$ equations.  In terms of entropies, these constraints
can be concisely written as
$I(X_i:X_i^{\nuparrow}|X_i^{\downarrow_1})=0$, which are linear
equalities in the entropies.

In general, the set of constraints on the underlying causal structure
implies additional constraints on the observed variables.  These can be
found by Fourier-Motzkin elimination~\cite{Williams1986} (see
also~\cite{Chaves2014,Chaves2015} for more details on its application
to causal structures).

For a causal structure $G$, we denote the set of achievable entropy
vectors by $\Gamma^*_\cM(G^\cC):=\{v:\exists P\in\cP_\cM(G^\cC)\text{
  with }v={\bf H}(P)\}$.  The closure of this,
$\overline{\Gamma^*_\cM}(G^\cC)$, is convex\footnote{This follows from
  the convexity of $\overline{\Gamma^*_n}$ and the fact that the causal constraints correspond to projections of this.}. 

\bigskip

The entropy vector approach was generalized to the quantum case
in~\cite{Pippenger2003}, and its application to causal structures
detailed in~\cite{Chaves2015}, which we now summarize.  The relevant
generalization of the Shannon entropy is the von Neumann entropy.  For
a system in state $\rho$ on $\cH_A$, it is defined by
$H(A):=-\tr(\rho\log\rho)$, and the quantum conditional entropy and
conditional mutual information are defined by replacing the Shannon
entropy by the von Neumann entropy in the classical definitions.  For
a quantum system comprising $n$ subsystems, we can again define a
vector $v\in\mathbb{R}^{2^n-1}$ whose entries are the corresponding
von Neumann entropies.  Like the Shannon entropy, the von Neumann
entropy is always positive and obeys sub-modularity.  However, it does
not in general obey monotonicity, but instead satisfies \emph{weak
  monotonicity}, i.e., $H(R)+H(S)\leq H(RT)+H(ST)$.  We call this set
of constraints \emph{von Neumann constraints}.  Like in the classical
case, these constraints are necessary, but not sufficient in order
that a given $v\in\mathbb{R}^{2^n-1}$ corresponds to the von Neumann
entropies of a joint quantum state~\cite{Pippenger2003}.

Rather than discuss the quantum version of arbitrary causal
structures, we consider here a restricted class that will be
sufficient for our purposes. In particular, we will consider causal
structures with only two generations, the first of which consists of
the unobserved variables and the second of the observed ones.  These
causal structures are, for example, relevant in the case that
spacelike separated observations are made (so that none of the
observed variables can be the cause of any other).  For convenience,
we will use $C_i$ or $C$, $D$, $E$ etc.\ for unobserved nodes, and
$X_i$ or $W$, $X$, $Y$ etc.\ for observed ones.  In this case, if the
causal structure is \emph{quantum}, each edge of the graph has an
associated Hilbert space, which can be labelled by the parent and
child, e.g., there will be a Hilbert space $\cH_{C_X}$ if the DAG
contains $C\rightarrow X$.  For each unobserved node there is an
associated quantum state, a density operator on the tensor product of
the Hilbert spaces associated with the edges coming from that node.
For each observed node there is an associated POVM that acts on the
tensor product of the Hilbert spaces associated with the edges that
meet at that node. The corresponding correlations are those resulting
from performing the specified POVMs on the relevant systems via the
Born rule.  An example is shown in Figure~\ref{fig:quantum}.  With
respect to a causal structure $G$, we use $\cP_{\cM}(G^{\qQ})$ to
denote the set of distributions on the observed nodes that can be
realised if the causal structure is quantum.

\begin{figure}
\centering
\includegraphics[width=0.65\columnwidth]{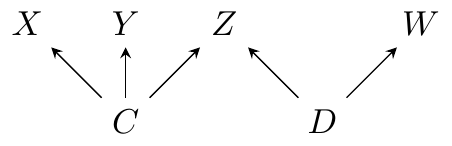}
\caption{Example causal structure.  If this is quantum, then the
  realisable correlations are those formed by measuring states of the
  form $\rho_{C_XC_YC_Z}\ot\sigma_{D_ZD_W}$ with separate measurements
  on $\cH_{C_X}$, $\cH_{C_Y}$, $\cH_{C_ZD_Z}$ and $\cH_{D_W}$.}
\label{fig:quantum}
\end{figure}

In the entropic picture, there is an entropy for each observed node
and for each edge of the DAG in question (for convenience we will
refer to both of these as subsystems in the following\footnote{Note,
  however, that they are not all subsystems of one joint quantum
  state.}).  While for $n$ jointly distributed random variables, all
the joint entropies make sense, this is no longer the case in a
quantum causal structure with $n$ subsystems. In particular, the
subsystems corresponding to the edges that meet at an observed node do
not coexist with the outcome at that node and hence there is no joint
quantum state from which the joint entropy can be derived.  For
example, if a measurement is performed on $\cH_{C_X}\ot\cH_{D_X}$ with
outcome $X$, then $H(C_XD_XX)$ is not well-defined, although
$H(C_XD_X)$ is\footnote{Note also that in the classical case the
  analogous argument fails as information can always be copied.}.  To
avoid this problem, the approach only considers entropies of
coexisting sets. Two subsystems are said to \emph{coexist} if neither
is a quantum ancestor of the other, and a set of subsystems that
pairwise coexist form a \emph{coexisting set}.

Within each coexisting set the von Neumann constraints hold.  However,
since the observed subsystems are classical, some of the weak
monotonicity constraints can be replaced by monotonicity.  For
example, if either $R$ or $S$ is a set of classical variables, then the monotonicity
constraint $H(RS)\geq H(R)$ holds.

The causal constraints are accounted for by the condition that two
subsets of a coexisting set are independent (and hence have zero
mutual information between them) if they have no shared ancestors.  To
connect different coexisting sets, data processing inequalities are
used.  For example, if a measurement is performed on
$\cH_{C_Y}\ot\cH_{D_Y}$ with outcome $Y$, then $I(C_YD_Y:X)\geq
I(Y:X)$ (cf.\ Figure~\ref{fig:quantum}).

Like in the classical case, we denote the set of achievable entropy
vectors by $\Gamma^*_\cM(G^\qQ):=\{v:\exists P\in\cP_\cM(G^\qQ)\text{
  with }v={\bf H}(P)\}$, and its closure
$\overline{\Gamma^*_\cM}(G^\qQ)$ is again convex.

\section{Line-like causal structures} \label{sec:pn}
For the remainder of this paper, we consider the family of line-like
causal structures shown in Figure~\ref{fig:Pn}. The causal structure
$P_n$ has observed nodes $X_1,X_2,\ldots,X_n$.  Each pair of
consecutive observed nodes $X_i$ and $X_{i+1}$ has an unobserved
parent $C_i$.

\begin{figure}
\centering
\includegraphics[width=0.98\columnwidth]{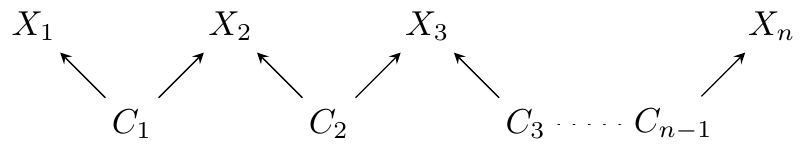}
\caption{The causal structure $P_n$. The nodes $X_i$
represent observed variables, whereas the 
$C_i$
denote the unobserved classical or quantum systems.}
\label{fig:Pn}
\end{figure}

The case $n=4$ is in one-to-one correspondence with the bipartite Bell
causal structure of Figure~\ref{fig:Bell}~\cite{Fritz2012}.  To make
the identification, take $X_1=A$, $X_2=X$, $X_3=Y$, $X_4=B$ and
$C_2=C$.  We can assume without loss of generality that $C_1=A$ and
$C_3=B$: the same set of observed correlations can be generated in
either case\footnote{To see this, note that for any bipartite quantum
  state $\rho_{CD}$, if $C$ is measured to generate $A$, then the post
  measurement state has the form $\sum_aP_A(a)\proj{a}\ot\rho^a_D$.
  The same joint state can be generated by sharing $A=a$ with
  distribution $P_A(a)$ and simulating the statistics of the state
  $\rho^a_D$ at $D$ conditioned on $A=a$.}.  

In the classical case the
node $C$ corresponds to a local hidden variable.  Free choice of
settings, crucial to the derivation of a Bell inequality, is naturally
encoded in the causal structure (e.g., $P_{A|BYC}=P_A$ follows as $A$
has no parents but $BYC$ as its non-descendants), as are the
conditions of \emph{local causality}, that
$P_{XY|ABC}=P_{X|AC}P_{Y|BC}$.  The only difference between $P_4^\cC$
and the quantum case, $P_4^\qQ$, is the nature of the node $C$.
Bell's original argument then implies that there are non-classical
correlations, i.e., there are distributions in $\cP_{\cM}(P_4^{\qQ})$
that are not in $\cP_{\cM}(P_4^{\cC})$.

In the following we will prove that, in spite of this separation, by
looking at the entropy vectors no distinction can be made.  This is stated more formally as follows.
\begin{theorem}\label{thm:1}
  $\overline{\Gamma^*_\cM}(P_n^\qQ)=\overline{\Gamma^*_\cM}(P_n^\cC)$ for all $n\in\mathbb{N}$.
\end{theorem}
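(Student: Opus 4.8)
The plan is to prove the two inclusions separately, obtaining the nontrivial one by sandwiching the quantum entropy cone between an explicit polyhedral outer bound and a classically achievable inner bound that turn out to coincide. The inclusion $\overline{\Gamma^*_\cM}(P_n^\cC)\subseteq\overline{\Gamma^*_\cM}(P_n^\qQ)$ is immediate, since any classical model is reproduced by a quantum one in which each shared system is a diagonal (classical) state; so all of the content lies in the reverse inclusion.

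For the reverse inclusion I would first record an outer approximation to $\overline{\Gamma^*_\cM}(P_n^\qQ)$ that invokes no genuinely quantum feature. The crucial point is that in $P_n$ the observed nodes $X_1,\dots,X_n$ are all leaves, so they pairwise coexist and the Born rule assigns them a genuine joint probability distribution; recall that $\overline{\Gamma^*_\cM}(P_n^\qQ)$ is by definition the set of entropy vectors of such observed distributions. Hence every quantum-achievable vector is the entropy vector of an honest classical distribution and obeys all Shannon constraints---in particular monotonicity, even though only weak monotonicity is guaranteed for general quantum subsystems. Moreover, two groups of observed variables sharing no quantum ancestor are in a product state, so their mutual information vanishes; for $P_n$ the minimal such constraints are $I(X_1\cdots X_i:X_{i+2}\cdots X_n)=0$ for each $i$, since $\{X_1,\dots,X_i\}$ has ancestors $\{C_1,\dots,C_i\}$ while $\{X_{i+2},\dots,X_n\}$ has the disjoint set $\{C_{i+1},\dots,C_{n-1}\}$. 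Writing $\mathcal{O}_n$ for the Shannon cone intersected with these ``gap'' independence equalities, this gives $\overline{\Gamma^*_\cM}(P_n^\qQ)\subseteq\mathcal{O}_n$, with $\mathcal{O}_n$ a polyhedral cone.

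The heart of the argument is then to show $\mathcal{O}_n\subseteq\overline{\Gamma^*_\cM}(P_n^\cC)$, forcing all three sets to collapse onto $\mathcal{O}_n$. Since $\overline{\Gamma^*_\cM}(P_n^\cC)$ is a convex cone---closed under convex combination and, via independent copies, under positive scaling---it suffices to realise each extremal ray of $\mathcal{O}_n$ by a distribution compatible with $P_n^\cC$, or a limit of such. I would therefore catalogue the extremal rays of $\mathcal{O}_n$ and, for each, build an explicit classical model: distributing a shared variable along the single latent node $C_i$ joining a neighbouring pair reproduces nearest-neighbour correlations, while the gap independences should be seen to forbid any genuinely long-range extremal ray. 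It is worth stressing that the nonclassical distributions (such as a PR box for $n=4$) themselves satisfy the independence equalities and so lie in $\mathcal{O}_n$; the claim is only that their \emph{entropy vectors}, not the distributions, are reproduced classically.

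I expect this extremal-ray step to be the main obstacle, for two reasons. First, the rays of the Shannon cone intersected with the gap constraints must be enumerated combinatorially, and each must be shown to be genuinely entropic: without the independence constraints this would fail for $n\ge 4$, where non-Shannon inequalities make the Shannon cone strictly larger than $\overline{\Gamma^*_n}$, so it is essential that the gap equalities confine us to a face on which the Shannon inequalities are tight. Second, one must supply a matching $P_n^\cC$ realisation for every ray; here an induction on $n$ looks promising, using the independence $X_1\cdots X_{n-2}\perp X_n$ to strip off the terminal node $X_n$ (which couples only through $X_{n-1}$) and reduce a ray of $\mathcal{O}_n$ to one of $\mathcal{O}_{n-1}$ together with an elementary nearest-neighbour contribution.
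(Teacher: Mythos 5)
Your strategy coincides with the paper's: the trivial inclusion $\overline{\Gamma^*_\cM}(P_n^\cC)\subseteq\overline{\Gamma^*_\cM}(P_n^\qQ)$, plus sandwiching the quantum cone between a polyhedral outer bound and a classically achievable inner bound. Your outer bound $\mathcal{O}_n$ (Shannon cone of the observed variables intersected with the gap independences) is exactly the cone the paper uses, cf.~\eqref{eq:P4indep} for $n=4$, and your justification of it is also the paper's: the observed nodes are all leaves, hence pairwise coexisting and carrying a genuine joint distribution, so the full Shannon constraints (including monotonicity) survive quantumly, while gap-separated sets have disjoint ancestors and are therefore independent. Your remark that the nonclassical distributions themselves lie in $\mathcal{O}_n$ and only their entropy vectors need be reproduced classically is likewise the right reading of the theorem.

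The gap is the step you yourself flag as the main obstacle: for general $n$ you neither enumerate the extremal rays of $\mathcal{O}_n$ nor exhibit the realising distributions, and the inductive ``strip off $X_n$'' idea is left unverified. The paper closes this with a concrete combinatorial device. First, it shows via d-separation that modulo the gap independences all but $\frac{n(n+1)}{2}$ of the Shannon inequalities are redundant: the $n$ monotonicities $H(\Omega\,|\,\Omega\setminus\{X_i\})\geq 0$ and the $\frac{n(n-1)}{2}$ sub-modularities $I(X_i:X_j|M_{i,j})\geq 0$ conditioned on exactly the nodes between $X_i$ and $X_j$. Second, the independences collapse the effective dimension to $\frac{n(n+1)}{2}$, since entropies of contiguous blocks determine all other entropies; a pointed polyhedral cone of dimension $d$ cut out by $d$ inequalities has at most $d$ extremal rays. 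Third, each candidate ray is realised by an explicit distribution $D_{i,j}$ built from shared uniform bits ($X_i=C_i$, $X_k=C_{k-1}\oplus C_k$ inside the interval, $X_j=C_{j-1}$, deterministic outside), which saturates every surviving inequality except the one indexed by $(i,j)$. Some argument of this kind is indispensable: as you correctly note, for $n\geq 4$ the Shannon cone strictly exceeds the entropy cone, so the tightness of $\mathcal{O}_n$ can only be established by explicit realisations, and until that step is carried out your plan is a correct blueprint rather than a proof.
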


Note that for $n\leq 3$, $\cP_\cM(P^\cC_n)=\cP_\cM(P^\qQ_n)$ and hence
in these cases the lemma immediately follows~\cite{Fritz2012}.  We
proceed to give the argument for $n=4$, deferring the general case to Appendix~\ref{sec:1}. 

Note also that the $n=5$ case is closely related to the so-called
bilocality scenario~\cite{Branciard2010,Branciard2012}, introduced in
the context of entanglement swapping. The difference to $P_5$ is that
bilocality also allows an additional observed ``input'' to the central
node. In fact, following an analogous argument to that of
Theorem~\ref{thm:1} reveals that in the bilocality scenario there is
also no separation between the classical and quantum entropy cones.

\begin{proof}[Proof of Theorem~\ref{thm:1} for $n=4$]
  The entropy vector of the joint distribution of $A$, $X$, $Y$ and
  $B$ has to obey the Shannon inequalities in both $P_4^\cC$ and
  $P_4^\qQ$. In addition, the causal structure directly implies the
  following independences among the four observed
  variables\footnote{Note that $A$ and $YB$ do not share any ancestors
    (similarly $AX$ and $B$).}:
\begin{equation}\label{eq:P4indep}
\begin{split}
I(A:YB)&=0,\\
I(AX:B)&=0.
\end{split}
\end{equation}
In both the classical and the quantum case, if the unobserved
subsystems are included, there are further valid (in)equalities
implied by the causal structure. The following argument shows,
however, that these do not impart any additional constraints on the
entropy vector of the observed nodes: in $P_4$ the Shannon
inequalities together with~\eqref{eq:P4indep} fully characterise the
set of achievable entropy vectors of the observed nodes in both the
classical and quantum case.

The Shannon inequalities on four variables together
with~\eqref{eq:P4indep} are necessary conditions on a vector
$v\in\mathbb{R}^{15}$ in order that there is a distribution $P_{AXYB}$
in $\cP_{\cM}(P^\cC_4)$ with ${\bf H}(P_{AXYB})=v$.  They therefore
form an an outer approximation to $\overline{\Gamma^*_\cM}(P^\cC_4)$.
This outer approximation is a convex cone that can equivalently be
expressed via its extremal rays.  Conversion between these two
descriptions can be conveniently done using software such as
PORTA~\cite{Christof} or PANDA~\cite{Loerwald} and results in the
following rays, where the components are ordered as
\begin{align*}
(&H(A),H(X), H(Y), H(B), H(AX), H(AY), \\ &H(AZ), H(XY), 
H(XB), H(YB), H(AXY), \\
&H(AXB), H(AYB), H(XYB), H(AXYB)),
\end{align*}
\begin{alignat*}{3}
&\text{(i)} \quad & &1 1 1 1 2 2 2 2 2 2 3 3 3 3 3 \\
&\text{(ii)} \quad & &0 1 1 1 1 1 1 2 2 2 2 2 2 2 2 \\
&\text{(iii)} \quad & &1 1 1 0 2 2 1 2 1 1 2 2 2 2 2 \\
&\text{(iv)} \quad & &0 0 0 1 0 0 1 0 1 1 0 1 1 1 1 \\
&\text{(v)} \quad & &0 0 1 0 0 1 0 1 0 1 1 0 1 1 1 \\
&\text{(vi)} \quad & &0 1 0 0 1 0 0 1 1 0 1 1 0 1 1 \\
&\text{(vii)} \quad & &1 0 0 0 1 1 1 0 0 0 1 1 1 0 1 \\
&\text{(viii)} \quad & &0 0 1 1 0 1 1 1 1 1 1 1 1 1 1 \\
&\text{(ix)} \quad & &0 1 1 0 1 1 0 1 1 1 1 1 1 1 1 \\
&\text{(x)} \quad & &1 1 0 0 1 1 1 1 1 0 1 1 1 1 1.
\end{alignat*}

If each of these rays is achievable using a distribution in
$\cP_\cM(P_4^\cC)$ then, by convexity of
$\overline{\Gamma^*_\cM}(P_4^\cC)$, the outer approximation is tight.  In other
words, any vector $v$ that obeys the Shannon constraints
and~\eqref{eq:P4indep} is achievable, i.e., in
$\overline{\Gamma^*_\cM}(P_4^\cC)$.  We establish this by taking
$C_1$, $C_2$ and $C_3$ to be uniform random bits and use the following
functions:
\begin{itemize}
\item (i): Take $A=C_1$, $X=C_1\oplus C_2$, $Y=C_2 \oplus C_3$ and
  $B=C_3$.
\item (ii): Let $A=1$ be deterministic and choose $X=C_2$, $Y=C_2\oplus
  C_3$ and $B=C_3$. (iii) can be achieved with an analogous strategy, where
  $B=1$ is the deterministic variable.
\item (iv): Choose $A=X=Y=1$ and $B=C_3$. (v), (vi) and (vii) are permutations of this strategy.
\item (viii): Let $A=X=1$ be deterministic and let $Y=B=C_2$. (ix) and (x) are permutations of this.
\end{itemize}

The outer approximation of the set of entropy vectors that are
achievable classically, $\overline{\Gamma^*_\cM}(P_n^\cC)$, given here
is also an outer approximation to $\overline{\Gamma^*_\cM}(P_n^\qQ)$.
Since the extremal rays are achievable the lemma follows.
\end{proof}

\section{Discussion}
Although for all $n\geq 4$ there are distributions in
$\cP_\cM(P_n^\qQ)$ that cannot be achieved in $\cP_\cM(P_n^\cC)$ the
entropy vector approach we have outlined is unable to detect this.
Even correlations that in other contexts are thought of as strongly
non-classical have this masked under the mapping to entropy vectors:
no function of the entropy vector acts as a certificate of
non-classicality in these causal structures. It is an interesting open
question as to whether this is generic: i.e., can entropy vectors ever
detect the difference between classical and quantum versions of a
given causal structure?  We discuss this question in more detail
in~\cite{non_shan}.

Because of the shortcomings of the entropy vector method, other
techniques will be needed to separate classical and quantum causal
structures.  Recently, other approaches to this have been developed,
one involving polynomial Bell
inequalities~\cite{Chaves2015a,Rosset2016} and the other drawing on
tools from algebraic geometry~\cite{Lee2015}.

As mentioned in the introduction, for certain causal structures
(including line-like ones), an alternative entropic technique can be
applied, as first introduced by Braunstein and
Caves~\cite{Braunstein1988}. In our terminology, the inequality
of~\cite{Braunstein1988} states that in the causal structure
$P_4^{\cC}$
\begin{equation}\label{eq:bc}
H(Y|X)_{11}+H(X|Y)_{10}+H(X|Y)_{01}-H(X|Y)_{00} \geq 0\, , 
\end{equation}
where $H(X|Y)_{ab}$ is the conditional entropy of the
conditional distribution $P_{XY|A=a,B=b}$.

The crucial idea behind the derivation of inequality~\eqref{eq:bc} is
that in the classical case there exists a joint distribution
$P'_{X_0X_1Y_0Y_1}$ whose marginals satisfy
$P'_{X_aY_b}=P_{XY|A=a,B=b}$ for all $a$ and
$b$~\cite{Fine1982,Fine1982a}. In the quantum case there is no such
distribution in general, and hence~\eqref{eq:bc} does not apply. Such
inequalities are not obtained with the entropy vector method because
the latter does not consider conditioning on particular outcomes.

It was shown in~\cite{Chaves2013} that every non-local
distribution in $P_4$ can be used to violate such an inequality if one
takes an appropriate convex combination with a local
distribution.
In fact, the inequality~\eqref{eq:bc} and its permutations are the only relevant inequalities for two measurements with dichotomic outcomes for each party~\cite{Fritz2013}.
These inequalities can also be generalized to the chained Bell
inequalities~\cite{Braunstein1988}, which allow for $A$ and $B$ to take any number of values~\cite{Chaves2013,Fritz2013}.

Entropic inequalities of this type (i.e., after conditioning on output values of some of the observed variables) may arise in other classical causal structures~\footnote{Note, however, that the application of this method to general causal structures is not so straightforward, as the justification of a statement similar to Fine's theorem is for many of them not evident.}. In Appendix~\ref{sec:2}, we show how additional entropic inequalities for $P_5^{\cC}$ and $P_6^{\cC}$ may be derived with this technique. It is an open question, however, as to whether any quantum violations of these extra inequalities exist.

It is natural to ask whether the entropy vector method can be used with other
entropy measures, the family of R\'enyi
entropies~\cite{Renyi1960_MeasOfEntrAndInf} being a natural
alternative, as considered in~\cite{Linden2013a}.  These do not obey
sub-modularity and hence the set of allowed entropy vectors is (using
known inequalities) far less constrained than in the von Neumann case.
Although R\'enyi conditional entropies satisfy $H_{\alpha}(A|BC)\leq
H_{\alpha}(A|B)$~\cite{Petz,TCR,MDSFT,FL,Beigi}, because the
conditional R\'enyi entropy cannot be expressed as a difference of
unconditional entropies, these relations do not lead to constraints on
the R\'enyi entropy vector.  Including conditional entropies as
separate elements of the entropy vector would allow use of these
relations, but given the expanded length of the vector and the
comparatively small number of additional constraints, we don't expect
this to be fruitful without further inequalities between R\'enyi
entropies.

One can also look at causal structures that allow post-quantum
non-signalling systems, such as non-local
boxes~\cite{Tsirelson1993,Popescu1994FP}, to be shared.  One approach
to this has been presented in~\cite{Henson2014}. In the case of $P_4$,
this yields the constraints of~\eqref{eq:P4indep} on the observed
variables. Hence, the proof of Theorem~\ref{thm:1} can be used to show
that functions of the entropy vector of the observed variables cannot
detect post-quantum non-locality either. Whether the entropy vectors
are ever able to encode information about the physical nature of the
involved variables, rather than mere independences, remains an open
question.

\bigskip

\begin{acknowledgements}
We thank Rafael Chaves for comments on a previous version of this work. RC is partly supported by the EPSRC's Quantum Communications Hub. 
\end{acknowledgements}

\appendix 

\section{Proof of Theorem~1} \label{sec:1} 
We rely on the following lemma~\cite{Yeung1997}
\begin{lemma}
  Consider $n$ variables $X_1,X_2,\ldots,X_n$ and define
  $\Omega:=\left\{X_1,X_2,\ldots,X_n\right\}$.  Taking positivity of
  each entropy to be implicit, the Shannon inequalities for these are
  generated from a minimal set of $n+n(n-1)2^{n-3}$ inequalities:
\begin{eqnarray}
H(\Omega | \Omega \setminus \{X_i\} )&\geq&0,\label{eq:monot}\\
I(X_i:X_j|X_S)&\geq&0,\label{eq:submod}
\end{eqnarray}
where the first is needed for all $X_i\in\Omega$ and the second for
all $X_S\subsetneq\Omega$, $X_i,X_j\in\Omega$, $X_i,X_j\notin X_S$,
$i<j$.
\end{lemma}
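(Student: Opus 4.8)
The plan is to establish the statement in two independent parts: \emph{generation}, that every Shannon inequality on $\Omega=\{X_1,\dots,X_n\}$ is a nonnegative (conic) combination of the $n+n(n-1)2^{n-3}$ inequalities \eqref{eq:monot} and \eqref{eq:submod}; and \emph{minimality}, that none of these listed inequalities is itself a nonnegative combination of the others, so that no generator can be dropped. For generation I would first reduce an arbitrary basic inequality to single-variable and pairwise form using the chain rules. Any conditional entropy $H(S\mid T)$ splits by the chain rule into a sum of terms $H(X_i\mid K)$ with $X_i\notin K$, and any conditional mutual information $I(S:T\mid U)$ splits into a sum of pairwise terms $I(X_i:X_j\mid K)$ with $X_i,X_j\notin K$; crucially, all coefficients appearing in these splittings are $+1$.

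It then remains to reduce each such single-variable or pairwise term to the listed generators. The key identity is the telescoping expansion obtained by adjoining the missing variables to the conditioning set one at a time,
\begin{equation*}
H(X_i\mid K)=H(X_i\mid\Omega\setminus\{X_i\})+\sum_{\ell=1}^{m} I\bigl(X_i:X_{j_\ell}\,\big|\,K\cup\{X_{j_1},\dots,X_{j_{\ell-1}}\}\bigr),
\end{equation*}
where $\{X_{j_1},\dots,X_{j_m}\}=\Omega\setminus(K\cup\{X_i\})$, which holds because each adjunction step contributes exactly one term of the form \eqref{eq:submod}. The final summand is of the form \eqref{eq:monot}, and each mutual-information term has conditioning set contained in $\Omega\setminus\{X_i,X_{j_\ell}\}$, hence is one of the listed inequalities \eqref{eq:submod}. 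Combining this with the chain-rule splittings writes any basic inequality as a nonnegative combination of the generators, which proves generation.

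For minimality I would show that each listed inequality fails to follow from the others by exhibiting, for each one, a single vector $v\in\mathbb{R}^{2^n-1}$ satisfying every \emph{other} listed inequality while violating the chosen one: the existence of such a witness rules out expressing the chosen inequality as a nonnegative combination of the rest, since the latter are all nonnegative at $v$. The natural tool is Yeung's signed $I$-measure $\mu^*$ on the atoms of the field generated by the sets associated with $X_1,\dots,X_n$, since any assignment on atoms determines such a $v$ (it need not be a genuine entropy vector, only a point in the polyhedral cone). For a monotonicity generator \eqref{eq:monot} this is clean: $H(X_i\mid\Omega\setminus\{X_i\})$ is the measure of a \emph{single} atom, which is contained in the region of no other generator, so setting that atom negative and the rest positive does the job.

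The main obstacle is precisely the minimality step for the pairwise generators \eqref{eq:submod}: there the region of $I(X_i:X_j\mid K)$ is in general a union of several atoms shared with other pairwise generators, so a naive one-atom perturbation does not isolate a single inequality, and one must design the assignment $\mu^*$ more carefully and verify that among all generators only the targeted one picks up a negative contribution. This is the technical heart of the argument and is where the combinatorial bookkeeping behind the count $n+n(n-1)2^{n-3}$ becomes essential: each generator must be shown to cut out a genuinely distinct facet of the Shannon cone, and it is the $I$-measure formalism that organises the combinatorics of subsets into atom assignments that can be checked systematically.
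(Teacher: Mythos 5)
First, a point of comparison: the paper does not prove this lemma at all---it is imported from Yeung (1997)---and only the \emph{generation} direction is actually used in Appendix A (it is what licenses restricting attention to the elemental inequalities before imposing the causal constraints). Your proposal therefore has to stand on its own, and its generation half does: the chain-rule splittings of $H(S\mid T)$ and $I(S:T\mid U)$ into single-variable and pairwise terms with unit coefficients, followed by the telescoping identity for $H(X_i\mid K)$, produce only inequalities of the two listed forms, and this is exactly the standard argument. The minimality witness for the $n$ monotonicity generators is also correct, since $H(X_i\mid\Omega\setminus\{X_i\})$ is the $I$-measure of a single atom that contributes to no other generator.

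The genuine gap is the one you flag but do not close: minimality of the $n(n-1)2^{n-3}$ sub-modularity generators. This is not a routine extension of the monotonicity case, and the obstruction is sharper than ``several shared atoms''. Writing $L:=\Omega\setminus(K\cup\{X_i,X_j\})$, the quantity $I(X_i:X_j\mid X_K)$ is the $\mu^*$-measure of the atoms whose positive part is $\{X_i,X_j\}\cup T$ with $T\subseteq L$. The atom with $T=\emptyset$ appears in $I(X_i:X_j\mid X_S)$ for \emph{every} $S$, so it cannot single out a particular $K$; and if all atoms of this family with $T\neq\emptyset$ are kept nonnegative, then for every $S\supseteq K$ one has $I(X_i:X_j\mid X_S)\le I(X_i:X_j\mid X_K)$, because the latter equals the former plus a sum of nonnegative atom masses --- so a negative target drags every larger conditioning set down with it. Hence negative mass must also be placed on atoms with $T\neq\emptyset$, but each such atom contributes to generators $I(X_a:X_b\mid\cdot)$ for every pair $\{X_a,X_b\}\subseteq\{X_i,X_j\}\cup T$, which then require their own compensation. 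A correct witness therefore needs a nontrivial signed assignment verified against all $n+n(n-1)2^{n-3}$ generators simultaneously (equivalently, one must show that each elemental inequality supports a facet of the full-dimensional polymatroid cone by exhibiting $2^n-2$ independent rays on its bounding hyperplane). Until such a construction is written down and checked, the ``minimal'' part of the statement is unproved; if you only need what the paper needs, drop minimality and keep the generation argument, and otherwise take the witnesses from, or reconstruct them along the lines of, Yeung's original proof.
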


Now take $X_1,X_2,\ldots,X_n$ to be the observed nodes in $P_n$ and
for $i+1<j$ define $M_{i,j}:=\left\{ X_k \right\}_{k=i+1}^{j-1}$ as
the set of nodes between $X_i$ and $X_j$. The first part of the proof
of Theorem~1 is to show that from these $n+n(n-1)2^{n-3}$
Shannon inequalities at most $\frac{n(n+1)}{2}$ are not implied by the
conditional independence constraints and the remaining Shannon
inequalities.

To directly read conditional independences off the DAG, we use a
condition known as d-separation. For a classical causal structure, if
$X$, $Y$ and $Z$ are disjoint sets of variables then $X$ and $Y$ are
said to be \emph{d-separated} by $Z$ if every path from a node in $X$
to a node in $Y$ contains one of (i) $c\rightarrow z\rightarrow d$
with $z\in Z$, (ii) $c\leftarrow z\rightarrow d$ with $z\in Z$ or
(iii) $c\rightarrow e\leftarrow d$ with $e\notin Z$.  As shown by
Verma and Pearl~\cite{Verma1990}, if a distribution is compatible with
a classical causal structure in which $X$ and $Y$ are d-separated by
$Z$, then $I(X:Y|Z)=0$.

\begin{lemma}\label{lemma:middlenodes}
Within the causal structure $P^\cC_n$, all of the sub-modularity inequalities~\eqref{eq:submod} with $M_{i,j}\not\subseteq X_S$
are implied by the causal constraints.
\end{lemma}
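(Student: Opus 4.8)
The plan is to prove the stronger statement that whenever $M_{i,j}\not\subseteq X_S$, the variables $X_i$ and $X_j$ are in fact \emph{d-separated} by $X_S$ in $P_n$. By the Verma--Pearl theorem quoted above this gives $I(X_i:X_j|X_S)=0$ as a conditional independence constraint valid for every distribution in $\cP_{\cM}(P_n^\cC)$, so the sub-modularity inequality $I(X_i:X_j|X_S)\geq 0$ is then implied trivially (indeed, as an equality).

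The first observation I would make is that the underlying undirected graph of $P_n$ is a \emph{tree}: each $C_m$ is joined only to $X_m$ and $X_{m+1}$, so the graph is the connected line $X_1-C_1-X_2-C_2-\cdots-C_{n-1}-X_n$, which has $2n-1$ nodes and $2(n-1)$ edges, i.e.\ exactly one fewer edge than it has nodes. Consequently there is a \emph{unique} trail between any two observed nodes, so to decide d-separation of $X_i$ and $X_j$ it suffices to examine the single trail $X_i\leftarrow C_i\rightarrow X_{i+1}\leftarrow C_{i+1}\rightarrow\cdots\leftarrow C_{j-1}\rightarrow X_j$.

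Next I would classify the interior nodes of this trail. Since every edge of $P_n$ points from an unobserved node to an observed one, each interior unobserved node sits in a fork $X\leftarrow C\rightarrow X$, while each interior observed node $X_k$ (with $i<k<j$, i.e.\ $X_k\in M_{i,j}$) sits in a collider $C_{k-1}\rightarrow X_k\leftarrow C_k$. The fork nodes are all unobserved and hence can never belong to the observed conditioning set $X_S$, so by rule (ii) they never contribute a block. A collider at $X_k$, by rule (iii), blocks the trail precisely when $X_k\notin X_S$; because $P_n$ has only two generations, $X_k$ is a leaf with no descendants, so rule (iii) as stated applies without any descendant caveat.

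Finally I would invoke the hypothesis: $M_{i,j}\not\subseteq X_S$ means some $X_k$ with $i<k<j$ satisfies $X_k\notin X_S$, and the collider at this $X_k$ then blocks the unique trail. Hence $X_i$ and $X_j$ are d-separated by $X_S$, giving $I(X_i:X_j|X_S)=0$ and thus the claimed implication. The argument carries essentially no hard step; the only points needing care are the justification that a single trail suffices (the tree property of $P_n$) and the correct reading of the collider rule for a leaf node. Once these are noted the conclusion is immediate, so I do not anticipate a substantial obstacle.
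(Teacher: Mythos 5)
Your proof is correct. It takes a route that is close in spirit to the paper's but differs in execution: you establish the conditional d-separation statement $X_i\perp X_j \mid X_S$ directly, by noting that the skeleton of $P_n$ is a tree, so the unique trail $X_i\leftarrow C_i\rightarrow\cdots\leftarrow C_{j-1}\rightarrow X_j$ is blocked at the collider $X_k\in M_{i,j}\setminus X_S$ (and you correctly observe that the forks at the unobserved $C_m$ can never block, and that the leaf property of $X_k$ makes the descendant-free collider rule as stated in the paper applicable). The paper instead uses d-separation only in its unconditional form: it splits $X_S$ into the parts left and right of the missing node $X_k$, asserts that $\{X_i\}\cup X_S^{k-}$ is independent of $\{X_j\}\cup X_S^{k+}$, and then verifies $I(X_i:X_j|X_S)=0$ by an explicit computation from the four resulting additivity equalities for the joint entropies. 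Your version avoids that computation at the cost of having to argue carefully about which interior nodes of the trail block under conditioning on $X_S$; the paper's version needs only the simpler observation that any path crossing an unconditioned collider is blocked, but then has to push the independence through the entropies by hand. Both yield the same (stronger) conclusion that the sub-modularity inequality holds with equality.
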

\begin{proof}
Let $M_{i,j} \not\subseteq X_S$, then there is at least one node $X_{k} \not\in X_S$ with $i<k<j$. For each such node we can partition $X_S=\left\{ X_S^{k-},~X_S^{k+} \right\}$, where $X_S^{k-}$ contains all $X_l \in X_S$ with $l<k$ and $X_S^{k+}$ contains the elements with $X_l \in X_S$ with $l>k$ (note that both sets might be empty). Since $\{X_i\}\cup X_S^{k-}$ is d-separated
from $\{X_j\}\cup X_S^{k+}$ we have
\begin{eqnarray*}
H(\left\{X_i,~X_j \right\} \cup X_S)&=&H(\left\{X_i \right\} \cup X_S^{k-})+H(\left\{X_j \right\} \cup X_S^{k+}),\\
H(\left\{X_i \right\} \cup X_S)&=&H(\left\{X_i \right\} \cup X_S^{k-})+H(X_S^{k+}),\\
H(\left\{X_j \right\} \cup X_S)&=&H( X_S^{k-})+H(\left\{X_j \right\} \cup X_S^{k+}), \\
H(X_S)&=&H( X_S^{k-})+H(X_S^{k+}),
\end{eqnarray*}
and thus~\eqref{eq:submod} is obeyed with equality.
\end{proof}

\begin{lemma}
  Within the causal structure $P_n^\cC$, the $\frac{n(n-1)}{2}$
  sub-modularity constraints of the form $I(X_i:X_j|M_{i,j})\geq 0$ for
  all $X_i,X_j$ with $i<j$ imply all sub-modularity
  constraints~\eqref{eq:submod}.
\end{lemma}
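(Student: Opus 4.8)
The plan is to show that every sub-modularity expression \eqref{eq:submod} can be written as a nonnegative combination of the special constraints $I(X_i:X_j|M_{i,j})\geq 0$ together with causal-independence equalities, by repeatedly peeling conditioning variables off with the chain rule and discarding the correction terms that vanish by d-separation. By Lemma~\ref{lemma:middlenodes} it suffices to treat the case $M_{i,j}\subseteq X_S$, so I record the index set of $X_S$ as $M_{i,j}$ together with some \emph{extra} indices lying below $i$ or above $j$. I would argue by strong induction on the number of extra variables (over all pairs and all admissible conditioning sets simultaneously); the base case, no extras and $X_S=M_{i,j}$, is a special constraint by definition. It is worth recording at the outset that $P_n$ is a tree (each $C_k$ is a root joined to $X_k$ and $X_{k+1}$), so between any two observed nodes there is a unique path, and $X_a$ and $X_b$ with $a<b$ are d-separated by $X_Z$ precisely when some intermediate $X_k$ ($a<k<b$) is absent from $X_Z$.

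For the inductive step, assume there is at least one extra variable and, using the left--right symmetry of $P_n$, let $X_a$ be the one of smallest index, so $a<i$. Writing $S'=S\setminus\{X_a\}$, the chain rule applied to $\{X_a,X_i\}$ against $X_j$ gives
\begin{equation*}
I(X_i:X_j|X_S)=I(X_i:X_j|X_{S'})+I(X_a:X_j|X_i,X_{S'})-I(X_a:X_j|X_{S'}).
\end{equation*}
The last term vanishes because the unique path from $X_a$ to $X_j$ runs through the collider $X_i$, which is not in $X_{S'}$; hence $X_a$ and $X_j$ are d-separated and $I(X_a:X_j|X_{S'})=0$. The first term has one fewer extra variable, so by the induction hypothesis it is already a nonnegative combination of special constraints.

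The crux is the middle term $I(X_a:X_j|X_i,X_{S'})$, which I read as a sub-modularity expression for the \emph{new} pair $(X_a,X_j)$ with conditioning set $(X_S\cup\{X_i\})\setminus\{X_a\}$. Two cases arise, both decided by d-separation. If some index strictly between $a$ and $i$ is absent from $X_S$ (a ``gap''), the corresponding collider is unconditioned, the path from $X_a$ to $X_j$ is blocked there, and the term is again zero. Otherwise $a,a+1,\dots,i-1$ all lie in $X_S$, so $M_{a,j}$ is contained in the new conditioning set and the term is a genuine instance of the problem for $(X_a,X_j)$; since $X_a$ is the leftmost index this instance has no extra variables below $a$ and hence strictly fewer extras overall, so the induction hypothesis applies once more. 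Either way $I(X_i:X_j|X_S)$ is exhibited as a nonnegative sum of constraints $I(X_p:X_q|M_{p,q})$ plus causal equalities, which is exactly the claimed implication. Unrolling the recursion in the fully contiguous case yields the explicit identity $I(X_i:X_j|X_S)=\sum_{p=a}^{i}\sum_{q=j}^{b}I(X_p:X_q|M_{p,q})$, where $[a,b]$ is the maximal run of consecutive indices of $X_S\cup\{X_i,X_j\}$ around $[i,j]$.

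The main obstacle I anticipate is bookkeeping rather than any single hard inequality: one must verify that every correction term produced by the chain rule is either identically zero by d-separation or a bona fide sub-modularity expression whose conditioning set still contains the relevant $M_{p,q}$ and which carries strictly fewer extra conditioning variables, so that the induction measure genuinely decreases and the expansion terminates with only special constraints. Tracking which colliders are conditioned (keeping a path active) and which are unconditioned (blocking it) is precisely where the care is needed, and the tree structure of $P_n$ is what makes this tracking manageable.
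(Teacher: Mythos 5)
Your proof is correct and follows essentially the same route as the paper's: you drop conditioning variables not contiguous with $[i,j]$ via d-separation and peel off the contiguous extras with the chain rule, each correction term either vanishing by d-separation or reducing to a strictly smaller instance, and your closed-form identity $I(X_i:X_j|X_S)=\sum_{p=a}^{i}\sum_{q=j}^{b}I(X_p:X_q|M_{p,q})$ is exactly what the paper obtains by its two-stage telescoping over $M_{j,j+L+1}$ and then $M_{i-K-1,i}$. The only difference is organizational: you run a single induction on the number of extra conditioning variables (removing one variable per step and invoking the reflection symmetry of $P_n$ when all extras lie above $j$), whereas the paper strips the right-hand block and then the left-hand block explicitly.
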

\begin{proof}
  Lemma~\ref{lemma:middlenodes} shows this to hold in the
  case $M_{i,j}\not\subseteq X_S$.  Thus, we restrict to the case
  $M_{i,j}\subseteq X_S$. Let us write $X_S=M_{i,j}\cup X_T$, where
  $X_T=X_S\setminus M_{i,j}$.

  First consider the case where $X_{i-1},X_{j+1}\notin X_T$. Here
  $M_{i-1,j+1}$ and $X_T$ are d-separated and hence
  \begin{eqnarray*}
    H(\{X_i,X_j\}\cup M_{i,j}\cup X_T)&=&H(\{X_i,X_j\}\cup M_{i,j})+H(X_T)\\
    H(\{X_i\}\cup X_T)&=&H(X_i)+H(X_T)\\H(\{X_j\}\cup X_T)&=&H(X_j)+H(X_T)\\H(M_{i,j}\cup X_T)&=&H(M_{i,j})+H(X_T)
  \end{eqnarray*}
  so that $I(X_i:X_j|M_{i,j}\cup X_T)=I(X_i:X_j|M_{i,j})$.

Next, consider the case where $X_k\in X_T$ for $k=j+1,j+2,\ldots,j+L$, but $X_{i-1},X_{j+L+1}\notin X_T$. By
d-separation, we have $I(X_i:X_j|M_{i,j}\cup
X_T)=I(X_i:X_j|M_{i,j}\cup\{X_{j+1},\ldots,X_{j+L}\})$,
and the latter expression can be more concisely written as
$I(X_i:X_j|M_{i,j}\cup M_{j,j+L+1})$.  Then,
\begin{align*}
&I(X_i:X_j|M_{i,j}\cup
M_{j,j+L+1})\\
&=I(X_i:M_{j,j+L+1}\cup\{X_j\}|M_{i,j})-I(X_i:M_{j,j+L+1}|M_{i,j})\\
&=I(X_i:M_{j,j+L+1}\cup\{X_j\}|M_{i,j})\\
&=I(X_i:X_j|M_{i,j})+I(X_i:M_{j,j+L+1}|M_{i,j+1})\\
&=I(X_i:X_j|M_{i,j})+I(X_i:M_{j+1,j+L+1}\cup\{X_{j+1}\}|M_{i,j+1}),
\end{align*}
where we have used $I(X_i:M_{j,j+L+1}|M_{i,j})=0$, which follows from
d-separation.  Noting the relation between the last term in the final line and the
third line, we can proceed to recursively decompose the
expression into
\begin{align}
I(X_i:X_j|&M_{i,j}\cup M_{j,j+L+1})\nonumber\\
&=\sum_{l=0}^L I(X_i:X_{j+l}|M_{i,j+l})\, .
\label{eq:removej}
\end{align}

Now suppose $X_k\in X_T$ for $k=i-1,i-2,\ldots i-K$ and
$k=j+1,j+2,\ldots,j+L$, but $X_{i-K-1},X_{j+L+1}\notin X_T$. By d-separation, we have $I(X_i:X_j|M_{i,j}\cup
X_T)=I(X_i:X_j|M_{i,j}\cup\{X_{i-K},\ldots,X_{i-1}\}\cup\{X_{j+1},\ldots,X_{j+L}\})$, and the latter expression can be more concisely written as $I(X_i:X_j|M_{i,j}\cup M_{i-K-1,i}\cup M_{j,j+L+1})$. Then,
\begin{align*}
&I(X_i:X_j|M_{i,j}\cup
M_{i-K-1,i}\cup
M_{j,j+L+1})\\
=&I(M_{i-K-1,i}\cup\{X_i\}:X_j|M_{i,j}\cup
M_{j,j+L+1})\\
&-I(M_{i-K-1,i}:X_j|M_{i,j}\cup
M_{j,j+L+1})\\
=&I(M_{i-K-1,i}\cup\{X_i\}:X_j|M_{i,j}\cup
M_{j,j+L+1})\\
=&I(X_i:X_j|M_{i,j}\cup
M_{j,j+L+1})\\
&+I(M_{i-K-1,i}:X_j|M_{i-1,j}\cup
M_{j,j+L+1})\\
=&I(X_i:X_j|M_{i,j}\cup
M_{j,j+L+1})\\
&+I(M_{i-K-1,i-1}\cup\{X_{i-1}\}:X_j|M_{i-1,j}\cup
M_{j,j+L+1}),
\end{align*}
where we have used $I(M_{i-K-1,i}:X_j|M_{i,j}\cup
M_{j,j+L+1})=0$, which follows from d-separation.
Noting the relation between the last term in the final line and the
third line, we can hence proceed to recursively decompose the
expression into
\begin{align*}
I(X_i:X_j|&M_{i,j}\cup
M_{i-K-1,i}\cup
M_{j,j+L+1})\\
&=\sum_{k=0}^K I(X_{i-k}:X_j|M_{i-k,j}\cup
M_{j,j+L+1})\, .
\end{align*}
The latter can then be decomposed using~\eqref{eq:removej}.
\end{proof}

Including the $n$ monotonicity constraints, there are at most
$\frac{n(n+1)}{2}$ Shannon inequalities that are not implied by the
conditional independence relations of $P_n^\cC$. These inequalities
constrain a pointed polyhedral cone with the zero vector as its
vertex. They hold for all entropy vectors in $P_n^\cC$ and thus
approximate the entropy cone $\overline{\Gamma^*_\cM}(P_n^\cC)$ from the outside. They are
also valid for $\overline{\Gamma^*_\cM}(P_n^\qQ)$ (recall that two subsets of a coexisting
set are independent if they have no shared ancestors).  Note that the
causal constraints reduce the effective dimensionality of the problem
to $\frac{n(n+1)}{2}$, since the entropies of contiguous sequences are
sufficient to determine all entropies\footnote{There are $n$
  contiguous sequences of length $1$, $\{H(X_i)\}_{i=1}^n$, $n-1$ of
  length $2$, $\{H(X_iX_{i+1})\}_{i=1}^{n-1}$, and so on, leading to
  $\sum_{i=1}^ni=\frac{n(n+1)}{2}$ in total.}.

The $\frac{n(n+1)}{2}$ inequalities can lead to at most
$\frac{n(n+1)}{2}$ extremal rays, which corresponds to the number of
ways of choosing $\frac{n(n+1)}{2}-1$ inequalities to be
simultaneously obeyed with equality. In the following we show that
this bound is saturated by constructing $\frac{n(n+1)}{2}$ entropy
vectors from probability distributions in $P_n^\cC$, each of which lies
on a different extremal ray.

Consider the following set of distributions in $P_n^\cC$ (leading to
corresponding entropy vectors). Let $\{C_i\}_{i=1}^{n-1}$ be uniform
random bits, and $1\leq i\leq j\leq n$.  For each $i,~j$ we define a distribution $D_{i,j}$.
\begin{itemize}
\item For $i\leq n-1$, $D_{i,i}$ is formed by taking
  $X_i=C_i$ and $X_k=1$ for all $k\neq i$, while $D_{n,n}$ has
  $X_i=C_{i-1}$ and $X_k=1$ for all $k\neq i$.
\item For $i<j$, $D_{i,j}$ is constructed in the following. Note that depending on $i$ and $j$, each of the parts indexed by $k$ below may also be empty.
\begin{itemize}
\item $X_k=1$ for $1\leq k\leq i-1$,
\item $X_i=C_i$,
\item $X_k=C_{k-1} \oplus C_k$ for $i+1\leq k\leq j-1$, where $\oplus$ denotes addition modulo 2,
\item $X_j=C_{j-1}$,
\item $X_k=1$ for $j+1\leq k\leq n$.
\end{itemize}
\end{itemize}
Note that the set of distributions $\{D_{i,j}\}_{i,j}$ for $1 \leq i \leq j \leq n$ is in one-to-one correspondence with the contiguous sequences from $\Omega$.

\begin{lemma}\label{lem:5}
The $\frac{n(n+1)}{2}$ entropy vectors of the probability distributions $\{D_{i,j}\}_{i,j}$ with $1 \leq i \leq j \leq n$ are extremal rays of $\overline{\Gamma^*_\cM}(P_n^\cC)$.
\end{lemma}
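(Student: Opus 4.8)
The plan is to work in the reduced coordinates highlighted in the paragraph preceding the lemma and to show that, in these coordinates, the candidate vectors and the surviving inequalities form a \emph{dual pair}: the square matrix obtained by evaluating each inequality on each candidate vector is the identity. Concretely, I would write $h_{a,b}:=H(X_aX_{a+1}\cdots X_b)$ for $a\le b$ (and $h_{a,b}:=0$ for $a>b$). Since the causal constraints make the entropy of any observed set equal to the sum of $h_{a,b}$ over its maximal contiguous blocks (the footnote's claim that contiguous sequences determine all entropies), these $\tfrac{n(n+1)}{2}$ numbers coordinatize $\overline{\Gamma^*_\cM}(P_n^\cC)$. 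In them the surviving sub-modularity constraint $I(X_i:X_j|M_{i,j})\ge0$ reads $h_{i,j-1}+h_{i+1,j}-h_{i,j}-h_{i+1,j-1}\ge0$, while the monotonicity constraint $H(\Omega|\Omega\setminus\{X_i\})\ge0$, using that $\{X_1,\ldots,X_{i-1}\}$ and $\{X_{i+1},\ldots,X_n\}$ are d-separated, reads $h_{1,n}-h_{1,i-1}-h_{i+1,n}\ge0$. I index both the inequalities and the distributions by pairs $(i,j)$ with $1\le i\le j\le n$: the diagonal $i=j$ corresponds to monotonicity $i$, the off-diagonal $i<j$ to sub-modularity $(i,j)$. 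Recall it has already been established that there are at most $\tfrac{n(n+1)}{2}$ extremal rays.

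Next I would compute the entropy vectors. For $i<j$, telescoping the XOR definitions gives $X_i\oplus\cdots\oplus X_j=0$, so $D_{i,j}$ makes $X_i,\ldots,X_j$ jointly uniform subject to this single parity relation with all other variables constant; hence every proper subset of $\{X_i,\ldots,X_j\}$ is uniform and independent. Thus for a contiguous block $[a,b]$ one has $h_{a,b}(D_{i,j})=|[a,b]\cap[i,j]|$ unless $[a,b]\supseteq[i,j]$, in which case the parity relation removes one bit and $h_{a,b}(D_{i,j})=|[i,j]|-1$. For the diagonal $D_{i,i}$ only one variable is random, so $h_{a,b}(D_{i,i})=1$ if $i\in[a,b]$ and $0$ otherwise.

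The crux is the identity-matrix computation. I would write $h_{a,b}(D_{i,j})=f_{i,j}(a,b)-g_{i,j}(a,b)$ with $f_{i,j}(a,b)=|[a,b]\cap[i,j]|$ and $g_{i,j}(a,b)=[a\le i]\,[b\ge j]$, and substitute into each functional. For sub-modularity the second difference of $f_{i,j}$ vanishes identically: writing $f_{i,j}(a,b)=\sum_{m=i}^{j}[a\le m\le b]$, the combination $[k\le m\le l-1]+[k+1\le m\le l]-[k\le m\le l]-[k+1\le m\le l-1]$ is $0$ for every $m$. The same second difference of the factorized $g_{i,j}$ equals $([k\le i]-[k<i])([l\ge j+1]-[l\ge j])=-[(k,l)=(i,j)]$. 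Hence sub-modularity $(k,l)$ evaluated on $D_{i,j}$ equals $[(k,l)=(i,j)]$, and it is tight on every diagonal distribution. For monotonicity, using $f_{i,j}(1,k-1)+f_{i,j}(k+1,n)=|[i,j]|-[k\in[i,j]]$ together with the values of $g_{i,j}$, monotonicity $k$ on $D_{i,j}$ (with $i<j$) collapses to $-1+\bigl([k\in[i,j]]+[k<i]+[k>j]\bigr)=0$, whereas on $D_{i,i}$ it equals $[k=i]$. These four cases give exactly that inequality $(k,l)$ evaluated on $D_{i,j}$ is the Kronecker delta $\delta_{(k,l),(i,j)}$.

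Finally I would draw the conclusion. The evaluation matrix being the identity forces the $\tfrac{n(n+1)}{2}$ functionals to be linearly independent (so the cone is simplicial and the vectors $v_{i,j}$ are nonzero and linearly independent), while each $v_{i,j}$, being the entropy vector of an actual distribution in $P_n^\cC$, already lies in the cone and saturates precisely the $\tfrac{n(n+1)}{2}-1$ inequalities indexed by $(k,l)\ne(i,j)$. Those tight inequalities are linearly independent, so their common zero set is one-dimensional and $v_{i,j}$ spans an extremal ray; distinct $(i,j)$ leave distinct inequalities strict, so the rays are distinct, and matching the upper bound they exhaust the extremal rays of $\overline{\Gamma^*_\cM}(P_n^\cC)$. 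The main obstacle is the third step, and within it the verification that the single parity constraint feeds into \emph{exactly one} sub-modularity inequality (through the second difference of the factorized indicator $g_{i,j}$) and into \emph{none} of the monotonicity inequalities; once this Kronecker-delta structure is in hand, everything else is bookkeeping.
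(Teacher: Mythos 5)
Your proposal is correct and follows essentially the same route as the paper: both proofs establish that $D_{i,j}$ saturates every one of the $\tfrac{n(n+1)}{2}$ surviving inequalities except the single one indexed by $(i,j)$, and conclude extremality from this Kronecker-delta structure. Your reduced-coordinate bookkeeping via the decomposition $h_{a,b}=f_{i,j}(a,b)-g_{i,j}(a,b)$ and second differences is just a more systematic packaging of the paper's direct computation of $H(X_k|M_{k,l})$ and the monotonicity gaps.
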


\begin{proof}
It is sufficient to prove the following:
\begin{itemize}
\item For each $i$, $D_{i,i}$ obeys all of the Shannon equalities with equality except the monotonicity relation $H(\Omega)-H(\Omega\setminus\{X_i\})\geq 0$, which is a strict inequality.
\item For $i<j$, $D_{i,j}$ obeys all of the Shannon inequalities with equality except $I(X_i:X_j|M_{i,j})\geq 0$, which is a strict inequality.
\end{itemize}

For the $n$ distributions $D_{i,i}$ all variables are independent and thus their entropy vectors automatically satisfy all sub-modularity inequalities with equality. Furthermore, for any $X_S \subsetneq \Omega$
with $X_i\notin X_S$ we have $H(\{X_i\}\cup X_S)=H(X_i)$. Thus, for $j\neq i$ we have
\begin{equation*}
H(\Omega)-H(\Omega\setminus\{X_j\})=0\, ,
\end{equation*}
while for $j=i$
\begin{equation*}
\begin{split}
H(\Omega)-H(\Omega\setminus\{X_j\})&=H(X_i) \\
&>0.
\end{split}
\end{equation*} 
This establishes the first statement.

Consider now the $(n-1)!$ distributions $D_{i,j}$ with $i<j$.  We
first deal with the monotonicity constraints.  For $k<i$ and $k>j$, we
have $H(\Omega)=H(\Omega\setminus X_k)=j-i$. Similarly, since any
$j-i-1$ elements of $M_{i-1,j+1}$ are sufficient to determine the
remaining element, we also have $H(\Omega\setminus X_k)=j-i$ for $i\leq
k\leq j$.  Thus, all the monotonicity constraints hold with equality.

For the sub-modularity constraints, it is useful to note that for any $D_{i,j}$ with $i<j$ we have
\begin{equation*}
H(X_k|M_{k,l})=\begin{cases}
1,&k=i\text{ and }k<l\leq j,\\
1,&i<k\leq j\text{ and }k<l,\\
0,&\text{otherwise}.
\end{cases} 
\end{equation*}
Thus, $I(X_k:X_l|M_{k,l})=H(X_k|M_{k,l})-H(X_k|M_{k,l+1})$ is zero
unless $k=i$ and $l=j$ (in which case it is $1$). This establishes the
second statement, and hence completes the proof of Lemma~\ref{lem:5}.
\end{proof}

Note that the entropy vector of each of the $\frac{n(n+1)}{2}$
distributions belongs to a different extremal ray. We have thus shown
that for each extremal ray of $\overline{\Gamma^*_\cM}(P_n^\cC)$ there
is a distribution in $\cP_\cM(P_n^\cC)$ whose entropy vector lies on
that ray.  It follows by convexity that any vector that satisfies all
the Shannon constraints and the causal constraints of the marginal
scenario in $P_n^\cC$ is realisable in $P_n^\cC$ (at least
asymptotically). Since the same outer approximation is valid for
$\overline{\Gamma^{*}_{\cM}}(P_n^{\qQ})$ and any classical
distribution can be realised quantum mechanically, we have
$\overline{\Gamma^{*}_{\cM}}(P_n^{\cC}) \subseteq
\overline{\Gamma^{*}_{\cM}}(P_n^{\qQ}) \subseteq
\overline{\Gamma_{\cM}^*}(P_n^\cC)$ and 
therefore $\overline{\Gamma^{*}_{\cM}}(P_n^{\cC}) =
\overline{\Gamma^{*}_{\cM}}(P_n^{\qQ})$.

\section{Remarks on the Braunstein-Caves technique}\label{sec:2}
The generalization of the Braunstein-Caves technique to other causal structures is difficult,
as a restriction on the alphabet size of certain variables is needed. In the case of $P_4$,
we have such a restriction because $C_1$ and $C_3$ can be assumed to
be equal to the observed $A$ and $B$, whose alphabets can be
determined by observation.  In $P_n$, this can always be done for the
outermost nodes, hence, in $P_5^\cC$, for example, with observed nodes
$A$, $X$, $Y$, $Z$ and $B$, with $A$ and $B$ binary, there exists a
joint distribution $P'_{X_0X_1YZ_0Z_1}$ that gives the correct
marginal distributions, i.e., $P'_{X_aYZ_b}=P_{XYZ|A=a,B=b}$ for all
$a$ and $b$.  This is defined by setting
\begin{widetext}
$$P'_{X_0X_1YZ_0Z_1}(x,x',y,z,z')=\sum_{C_2C_3}P_{C_2}P_{C_3}P_{X|A=0,C_2}(x)P_{X|A=1,C_2}(x')P_{Y|C_2C_3}(y)P_{Z|B=0,C_3}(z)P_{Z|B=1,C_3}(z')\,
.$$
\end{widetext}
For $P'_{X_0X_1YZ_0Z_1}$, entropic inequalities can be derived with
the entropy vector approach applied to the causal structure
$\tilde{P}_3$ shown in Figure~\ref{fig:bcPn}. Note that the $P_5$
scenario is related to
bilocality~\cite{Branciard2010,Branciard2012}.
\begin{figure}
\centering
\includegraphics[width=0.65\columnwidth]{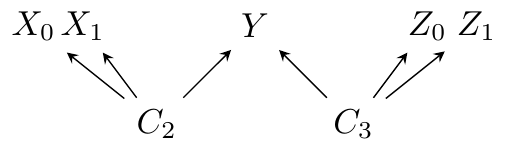}
\caption{Causal structure $\tilde{P}_3$ representing the independences of the
  variables in $P'_{X_0X_1YZ_0Z_1}$.}
\label{fig:bcPn}
\end{figure}
The Shannon and conditional independence constraints that restrict the
corresponding five variable entropy cone are marginalized to the four
triples of variables $\left\{X_0,~Y,~Z_0 \right\}$,
$\left\{X_0,~Y,~Z_1 \right\}$, $\left\{X_1,~Y,~Z_0 \right\}$ and
$\left\{X_1,~Y,~Z_1 \right\}$ and their subsets. As shown
in~\cite{Chaves2012a}, in addition to Shannon inequalities, it yields a further $36$ (in)equalities, made up of
the $7$ families listed below.  Note that the first family is a
consequence of the Shannon constraints involving all variables, and
holds independently of the causal structure.
\begin{widetext}
\begin{align}\nonumber
H(X_1|Y Z_1)+H(Z_0|X_1 Y)+H(Z_1|X_0 Y)-H(Z_0|X_0 Y) &\geq 0, \\\nonumber
H(X_0|Y Z_1)+H(Z_0|Y X_1)-H(X_0|YZ_0)+H(X_0Y)-H(X_0Z_0) &\geq 0, \\\nonumber
H(Y|X_0 Z_0)+H(X_1|Y Z_1)-H(X_1|Y Z_0) &\geq 0, \\\label{eq:others}
H(X_1|Y Z_0)+H(X_0 Y)+H(Y Z_1)-H(X_1 Y)-H(X_0 Z_1) &\geq 0, \\\nonumber
H(X_0 Y Z_1)+H(X_1 Y Z_0)-H(X_1 Y Z_1)-H(X_0 Z_0) &\geq 0, \\\nonumber
H(X_0|Y Z_1)+H(Y Z_0)-H(X_0 Z_0) &\geq 0, \\\nonumber
I(X_0 : Z_0)&=0.
\end{align}
\end{widetext}
These can be expanded to the full set by noting the symmetry between $X_0$
and $X_1$, between $Z_0$ and $Z_1$ and between $X$ and $Z$.  They form
an outer approximation to $\overline{\Gamma^*_\cM}(\tilde{P}^{\cC}_3)$.

The inequalities can be converted to the following extremal rays, with components ordered as\\
($H(X_0)$, $H(X_1)$, $H(Y)$, $H(Z_0)$, $H(Z_1)$, $H(X_0Y)$,
$H(X_0Z_0)$, $H(X_0 Z_1)$, $H(X_1Y)$, $H(X_1Z_0)$, $H(X_1Z_1)$,
$H(YZ_0)$, $H(YZ_1)$, $H(X_0YZ_0)$, $H(X_0YZ_1)$, $H(X_1YZ_0)$,
$H(X_1YZ_1)),$
\begin{alignat*}{3}
&\text{(i)} \quad &     &1 1 1 1 1  2 2 2 2 2 2 2 2  2 2 2 2  \\
&\text{(ii)} \quad &    &0 1 1 1 1  1 1 1 2 2 2 2 2  2 2 2 2  \\
&\text{(iii)} \quad &   &0 1 1 0 1  1 0 1 2 1 2 1 2  1 2 2 2  \\
&\text{(iv)} \quad &    &0 1 1 1 0  1 1 0 2 2 1 2 1  2 1 2 2  \\
&\text{(v)} \quad &     &1 0 1 1 1  2 2 2 1 1 1 2 2  2 2 2 2  \\
&\text{(vi)} \quad &    &1 0 1 0 1  2 1 2 1 0 1 1 2  2 2 1 2  \\
&\text{(vii)} \quad &   &1 0 1 1 0  2 2 1 1 1 0 2 1  2 2 2 1  \\
&\text{(viii)} \quad &  &1 1 1 0 1  2 1 2 2 1 2 1 2  2 2 2 2  \\
&\text{(ix)} \quad &    &1 1 1 1 0  2 2 1 2 2 1 2 1  2 2 2 2  \\
&\text{(x)} \quad &     &0 0 0 0 1  0 0 1 0 0 1 0 1  0 1 0 1 \\
&\text{(xi)} \quad &    &0 0 0 1 0  0 1 0 0 1 0 1 0  1 0 1 0 \\
&\text{(xii)} \quad &   &0 0 1 0 0  1 0 0 1 0 0 1 1  1 1 1 1\\
&\text{(xiii)} \quad &  &0 1 0 0 0  0 0 0 1 1 1 0 0  0 0 1 1 \\
&\text{(xiv)} \quad &   &1 0 0 0 0  1 1 1 0 0 0 0 0  1 1 0 0 \\
&\text{(xv)} \quad &    &0 0 1 1 1  1 1 1 1 1 1 1 1  1 1 1 1 \\
&\text{(xvi)} \quad &   &0 0 1 1 0  1 1 0 1 1 0 1 1  1 1 1 1 \\
&\text{(xvii)} \quad &  &0 0 1 0 1  1 0 1 1 0 1 1 1  1 1 1 1 \\
&\text{(xviii)} \quad & &1 1 1 0 0  1 1 1 1 1 1 1 1  1 1 1 1 \\
&\text{(xix)} \quad &   &1 0 1 0 0  1 1 1 1 0 0 1 1  1 1 1 1 \\
&\text{(xx)} \quad &    &0 1 1 0 0  1 0 0 1 1 1 1 1  1 1 1 1.
\end{alignat*}
These rays can be generated from those of the entropic cone of
$P_3$. To do so, let $X$, $Y$ and $Z$ be distributed according to one
of the six distributions reproducing the extremal rays of the entropy
cone of $P_3$. In the cases where $X$ is a random bit, let either
$X_0=X$ and $X_1=1$, or $X_0=1$ and $X_1=X$, or $X_0=X_1=X$, and the
same for $Z$. Doing this for all extremal rays of $P_3$, the above
extremal rays $(i)$--$(xx)$ are recovered (as well as some additional
redundant ones). This shows that the above outer approximation to
$\overline{\Gamma^*_\cM}(\tilde{P}^{\cC}_3)$ is tight: all entropy vectors that
satisfy the Shannon constraints and~\eqref{eq:others} are in
$\overline{\Gamma^*_\cM}(\tilde{P}^{\cC}_3)$.

The same technique can be applied to $P_n$ via causal structures
$\tilde{P}^{\cC}_{n-2}$. For $P_6$ this gives a total of $16$ entropic
equalities, expressing independences among the involved variables and
$153$ inequalities (including Shannon inequalities).  
In the case of $P_6$, the extremal
rays can also be generated starting from those of $P_4$ and splitting analogously to the treatment for $P_3$ above. This yields a complete characterization of $\overline{\Gamma^*_\cM}(\tilde{P}^{\cC}_4)$.

All entropic inequalities characterising
$\overline{\Gamma^*_\cM}(\tilde{P}^{\cC}_3)$ and
$\overline{\Gamma^*_\cM}(\tilde{P}^{\cC}_4)$ can be calculated without
considering the unobserved nodes: only the Shannon inequalities and
the independences among the observed variables are needed for their
derivation. Note that the same independence constraints also hold in
the analogous quantum casual structure. However, in the quantum case
the observed $X_0$ and $X_1$ as well as $Z_0$ and $Z_1$ do not coexist
and thus do not necessarily allow for a joint distribution. It is thus
not justified to analyse the causal structure $P_5^{\qQ}$ using the
related structure $\tilde{P}^{\qQ}_3$, and some of the Shannon
constraints among the variables $\left\{X_0,~X_1,~Y,~Z_0,~Z_1
\right\}$ may not hold in the quantum case. This treatment does not
therefore imply that there are no quantum violations to the classical
entropic inequalities in this approach, and, at present, we do not
know whether or not violations exist. If we allow post-quantum
non-signalling systems, however, such violations have been
found~\cite{Chaves2016}.

\end{document}